\theoremstyle{plain}
\newtheorem{theorem}{Theorem}
\newcommand{\nth}[1]{{#1}^{\text{th}}}
\newcommand{\abs}[1]{\left|{#1}\right|}
\newcommand{\NBS}[0]{N_{\mathrm{BS}}}
\newcommand{\RD}[0]{r_{\mathrm{RD}}}
\newcommand{\au}[0]{\theta_{\mathrm{u}}}
\newcommand{\rf}[0]{r_{\mathrm{F}}}
\begin{document}

\title{Near-Field Beam Prediction Using Far-Field Codebooks in Ultra-Massive MIMO Systems \\
\author{Ahmed Hussain, Asmaa Abdallah, Abdulkadir Celik, and Ahmed M. Eltawil,
\\ Computer, Electrical, and Mathematical Sciences and Engineering (CEMSE) Division,
\\King Abdullah University of Science and Technology (KAUST), Thuwal, 23955-6900, KSA }
}

\maketitle

\begin{abstract}
Ultra-massive multiple-input multiple-output (UM-MIMO) technology is a key enabler for 6G networks, offering exceptional high data rates in millimeter-wave (mmWave) and Terahertz (THz) frequency bands. The deployment of large antenna arrays at high frequencies transitions wireless communication into the radiative near-field, where precise beam alignment becomes essential for accurate channel estimation. Unlike far-field systems, which rely on angular domain only, near-field necessitates beam search across both angle and distance dimensions, leading to substantially higher training overhead. To address this challenge, we propose a discrete Fourier transform (DFT) based beam alignment to mitigate the training overhead. We highlight that the reduced path loss at shorter distances can compensate for the beamforming losses typically associated with using far-field codebooks in near-field scenarios. Additionally, far-field beamforming in the near-field exhibits angular spread, with its width determined by the user's range and angle. Leveraging this relationship, we develop a correlation interferometry (CI) algorithm, termed CI-DFT, to efficiently estimate user angle and range parameters. Simulation results demonstrate that the proposed scheme achieves performance close to exhaustive search in terms of achievable rate while significantly reducing the training overhead by 87.5\%.
\end{abstract}

\begin{IEEEkeywords}
Near-field, beam training, angular spread, correlative interferometry
\end{IEEEkeywords}

\section{Introduction}
Future networks operating in the millimeter-wave (mmWave) and terahertz (THz) frequency bands are anticipated to provide substantial bandwidth, enabling extremely high-capacity wireless links \cite{rappaport2019wireless}. However, these higher frequencies face considerable challenges due to significant path loss and molecular absorption effects \cite{han2021hybrid}. Conversely, the shorter wavelengths associated with these bands make it feasible to deploy ultra-massive multiple-input-multiple-output (UM-MIMO) antenna arrays within a compact footprint. The combination of smaller wavelengths and larger array apertures will likely result in future communications predominantly operating in the radiative near-field.

Highly directional links are achieved due to the combination of high operating frequency and large antenna aperture. The resulting narrow beams impose strict requirements on precise beam alignment. Beam training and beam estimation techniques are commonly employed to determine the optimal beam direction \cite{giordani2018tutorial,OS_Asmaa}. Beam training specifically involves sweeping a grid of beams to cover all potential users. After beam estimation, the user reports the beam index corresponding to the direction with the highest gain.

The challenge of beam alignment in near-field communications stems from the high training overhead, making the process highly resource-intensive \cite{asmaa2024near}. Unlike far-field, near-field beam training involves sweeping beams across both angular and distance domains. Considering the hardware cost and complexity of deploying a radio frequency (RF) chain for each antenna, UM-MIMO systems typically employ hybrid beamforming (HBF) architectures. The cost and complexity reduction comes at the expense of increased beam training overhead since the number of beams that can be generated simultaneously by HBF is limited by the available RF chains.

Recently, various codebook design strategies have been explored to address the challenges of high complexity and training overhead of UM-MIMO systems. A polar codebook with non-uniform distance sampling is presented in \cite{cui2022channel}. One straightforward approach with this codebook involves performing an exhaustive search across the angle-distance domain. To mitigate the search complexity, a two-step method is used: first, the user’s angle is estimated with a DFT codebook, followed by range estimation using the polar-domain codebook \cite{zhang2022fast}. Another strategy entails developing a multi-resolution codebook \cite{lu2023hierarchical}, drawing inspiration from hierarchical beam training techniques used in far-field scenarios, although these methods require user feedback to select the next set of codewords. Subarray-based schemes \cite{wu2023two} initially establish a coarse user direction by assuming far-field conditions and then refine it using a specialized near-field codebook in the polar domain. However, maintaining the far-field assumption requires reducing subarray element count, leading to broader beams and reduced spatial resolution. Moreover, existing methods that utilize polar codebook-based beam training suffer from excessive training overhead. 

To reduce the training overhead, we propose a near-field beam training method using a DFT codebook, challenging the prevailing reliance on polar codebook approaches in the literature. The DFT codebook in the near-field leads to energy dispersion across the angular domain, producing multiple peaks within the beam space—a phenomenon referred to as \textit{angular spread}. Traditionally, the angular energy spread has been viewed as a limitation, favoring the use of polar codebooks. However, we capitalize on this angular spread to extract both range and angle information for effective near-field beam training. Furthermore, we introduce the coverage regions for both near-field and far-field beams and show that far-field beams can maintain adequate signal-to-noise ratio (SNR) in the near-field by balancing the reduced beamforming gain with the decreased path loss at closer distances. We generate a grid of orthogonal beams based on the DFT codebook, revealing that a near-field user experiences significant gains from a specific set of DFT beams, constituting the angular spread. We show that the width of this angular spread is related to the user's location, and leverage this relationship to develop a \textit{correlation interferometry} (CI) algorithm called CI-DFT, which effectively estimates user angle and range parameters. Simulation results confirm the superior performance of our proposed CI-DFT algorithm compared to polar codebook-based methods, achieving an 87.5\% reduction in beam training overhead compared to exhaustive search.



\begin{figure}[t]
\centering
\includegraphics[width=0.35\textwidth]{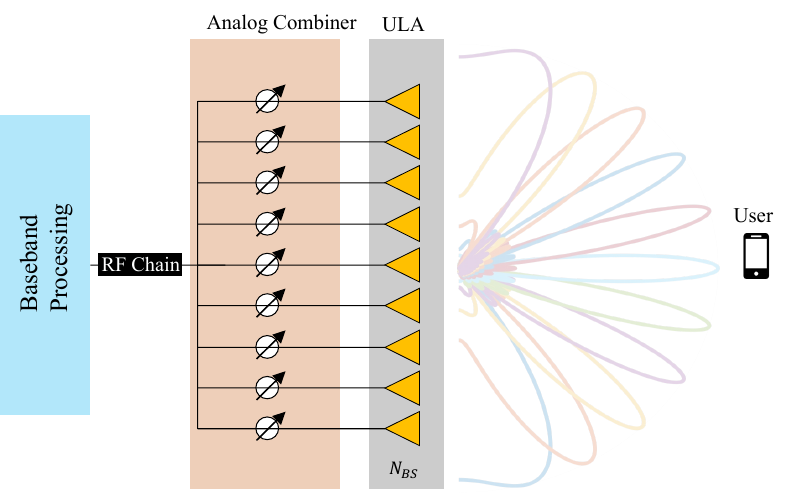}
\setlength{\belowcaptionskip}{-15pt}
\caption{
UM-MIMO System Model }
\label{system_model}
\end{figure}

\section{System Model and Problem Formulation } \label{System Model and Problem Formulation}
We consider the downlink beam training for a narrow-band communication system with UM-MIMO antenna array at the base station (BS) and a single isotropic antenna user equipment (UE) as depicted in Fig. \ref{system_model}. We consider a single RF chain transceiver with a uniform linear array (ULA) of $\NBS$ antenna elements spaced $d = \lambda/2$ apart to avoid grating lobes. 

\subsection{Channel Models}
 The far-field channel model is characterized by the spatial steering vector, $\mathbf{a}(\theta) $, that depends on angle, $\theta$, only. The far-field multipath channel $\mathbf{h}^{\text{far-field}} \in \mathbb{C}^{\NBS \times 1}$, is modeled as 
\begin{equation}
\mathbf{h}^{\text{far-field}} = \sqrt{\frac{\NBS}{L}}\sum_{l=1}^{L}g_le^{-j{\nu}r_l}\mathbf{a}(\theta_l),
\label{eqn1}
\end{equation}
where $L$ denotes the number of channel paths; $g_l$ represents the complex gain of the $\nth{l}$ path; $\theta_l$ is the angle of arrival (AoA); $r_l$ is the distance between the user and ULA along the $\nth{l}$ channel path; \( \nu = \frac{2\pi f}{c} \) is the wavenumber; \( f \) is the carrier frequency; and \( c \) is the speed of light. Based on the planar wavefront assumption, the spatial steering vectors $\mathbf{a}(\theta) \in \mathbb{C}^{\NBS \times 1}$ are formulated as
\begin{equation} 
\mathbf{a} (\theta) = \tfrac{1}{\sqrt{\NBS}}\Big[1, e^{-j\nu d\sin(\theta)} ,\dots, e^{-j\nu d(\NBS-1)\sin{(\theta)}}\Big]^\mathsf{T}.
\label{eqn2}
\end{equation}

On the other hand, the near-field channel, $\mathbf{h} \in \mathbb{C}^{\NBS \times 1}$, relies on spherical wavefront assumption, which results in the following channel model 
\begin{equation} 
\mathbf{h} = \sqrt{\frac{\NBS}{L}}\sum_{l=1}^{L}g_le^{-j{\nu}r_l}\mathbf{b}(\theta_l,r_l),
\label{eqn3}
\end{equation}
where $\mathbf{b}(\theta_l,r_l) \in \mathbb{C}^{\NBS \times 1}$ is the near-field steering vector that depends on both distance $r_l$ and angle $\theta_l$. The near-field steering vector is given by
\begin{equation} 
\mathbf{b} (\theta_l,r_l) = \tfrac{1}{\sqrt{\NBS}}\Big[e^{-j\nu_{c}(r_l^{(0)} -r_l)} ,\dots, e^{-j\nu_{c}(r_l^{(\NBS-1)} -r_l)}\Big]^\mathsf{T},
\label{eqn4}
\end{equation}
where $r^{(n)}$ is the distance between the user and the $n^{th}$ antenna element; the phase shift factor $\nu(r^{(n)}-r)$ is given by $\frac{2\pi}{\lambda} (\sqrt{r^2+ n^2d^2-2rnd\sin{(\theta)}} - r)$ \cite{cui2024near}, where non-linear phase progression can be observed.

\subsection{Problem Formulation}
During downlink training, $\nth{p}$ pilot transmitted by the BS is denoted as $x_p$, $p \in [1,P]$, where $P$ is the total pilot length. Then, the received signal by the user at the $\nth{p}$ pilot is expressed as 
\begin{equation} 
{y}_p= \mathbf{w}_{p
}\mathbf{h}x_p + \mathbf{w}_{p}\mathbf{n},
\label{eqn5}
\end{equation}
where $\mathbf{w}_{p} \in\mathbb{C} ^ {1 \times \NBS}$ is the analog combining vector, $\mathbf{n} \sim \mathcal{N}(0,\sigma^2)$ is the complex Gaussian noise, and $\sigma^2$ is the noise variance. Since the high-frequency narrowband communication between UM-MIMO transceivers is dominated by the single line-of-sight (LoS) path, this work focuses on achieving beam alignment based on the LoS path. Therefore, we focus on the near-field LoS channel, $\mathbf{h}(\theta,r)$, in the rest of the paper. Beam alignment aims to find the optimal codeword that maximizes the SNR
\begin{equation} 
 \max_{\mathbf{w}} \ \ \left|\mathbf{w}\mathbf{h}(\theta,r) \right|, \\
 \text { s.t. } \left\|\mathbf{w} \right\|_2 = 1, \ \mathbf{w} \in \mathbf{\Phi}, 
\label{eqn6}
\end{equation}
where $\mathbf{\Phi}$ defines a predefined codebook. The optimal solution to \eqref{eqn6} is to select $\mathbf{w}$ from a polar codebook $\mathbf{\Phi}$ 
\begin{equation} 
\boldsymbol{\Phi} =\Big[\boldsymbol{\phi}_1,\cdots,\boldsymbol{\phi}_n,\cdots,\boldsymbol{\phi}_{\NBS}\Big],
\label{eqn7}
\end{equation}
where $\boldsymbol{\phi}_n = \left[\mathbf{b}\left({\theta}_{n}, {r}_{1,n}\right), \mathbf{b}\left({\theta}_{n}, {r}_{2,n}\right), \cdots, \mathbf{b}\left({\theta}_{n}, {r}_{{S_n,n}}\right)\right]$ is constructed by concatenating $S_n$ near-field steering vectors $\mathbf{b}(\theta,r)$ for $S_n$ distance samples corresponding to angle $\theta_n$. Consequently, it is expensive to search over $\NBS S$ codewords in $\mathbf{\Phi} \in \mathbb{C}^{\NBS \times S}$ where $S=\sum_{n=1}^{\NBS} S_n$. Thus, in this paper, we focus on utilizing only $\NBS$ DFT codewords instead of the full set of $\NBS S$ polar codewords. 

\section{Coverage of Near and Far-field Codewords} 
Existing polar codebooks are constructed based on the classical Rayleigh distance, which defines the boundary between the radiative near and far-field and given by $r_\mathrm{RD} = \frac{2D^2}{\lambda}$, where $D$ represents the array aperture. However, the Rayleigh distance tends to overestimate the near-field region concerning beamfocusing and multiplexing gains \cite{bjornson2021primer,hussain2024near,ahmed2024nearfield}. As a result, polar codebooks depending on this boundary incur excessive sampling and leads to large codebook sizes. In the following sections, we revise the more accurate near-field boundaries based on near-field beam focusing from our previous work \cite{hussain2024near,ahmed2024nearfield,asmaa2024near} and discuss the feasibility of using far-field codewords in the near-field region.
 
\subsection{Beamdepth and Effective Beamfocused Rayleigh Distance }
Beamdepth $r_\mathrm{BD}$ is defined as the distance interval, $r \in [\rf^\mathrm{min}, \rf^\mathrm{max}]$, where normalized array gain is at most $\unit[3]{dB}$ lower than its maximum value. For a ULA, the $\unit[3]{dB}$ beamdepth, $r_\mathrm{BD} = \rf^\mathrm{max}-\rf^\mathrm{min}$, obtained by focusing a beam at a distance $\rf$ and angle $\theta$ from the BS is given by \cite{hussain2024near,ahmed2024nearfield}
\begin{equation} 
\begin{aligned}
r_\mathrm{BD} \approx\frac{\rf\RD\cos^2{(\theta)} }{\RD\cos^2{(\theta)}- 10\rf} - \frac{\rf\RD\cos^2{(\theta)} }{\RD\cos^2{(\theta)}+ 10\rf}.
\end{aligned}
\label{eqn8}
\end{equation}

Finite depth beams are achieved in the near-field only when the focus point lies within a specific region in the near-field. The maximum value of $r_\mathrm{BD}$ is obtained when the denominator $\RD^2\cos^4{(\theta)}-100\rf^2 = 0$ in \eqref{eqn8}. Thus, the farthest angle-dependent axial distance $\rf$, where we can have finite depth beamforming, is less than $\frac{\RD}{10}\cos^2{(\theta)}$, termed as effective beamfocused Rayleigh distance (EBRD) \cite{hussain2024near,ahmed2024nearfield,asmaa2024near}. For values of $\rf$ greater than $r_\mathrm{EBRD}$, $r_\mathrm{BD} \rightarrow \infty$. To summarize, near-field beams are achievable within the EBRD region, while far-field beams are formed outside the EBRD. Consequently, the EBRD serves as a practical boundary between the near-field and far-field regions.

\begin{figure}[t]
\centering
\includegraphics[scale = 0.40]{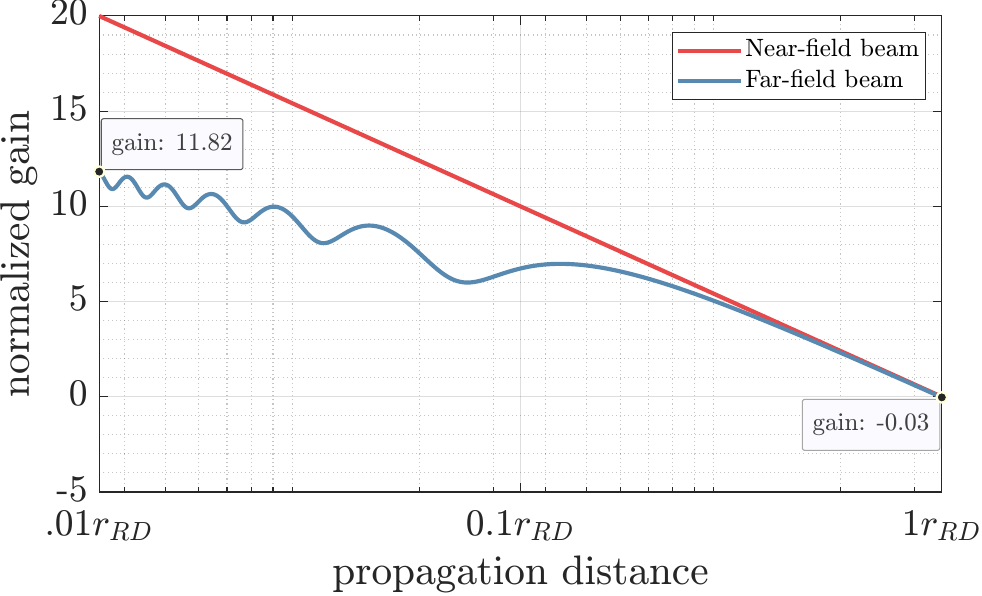}
\setlength{\belowcaptionskip}{-15pt}
\caption{Normalized gain of near and far-field beams in the near-field.}
\label{fig2}
\end{figure}

\subsection{ Is Far-field Beam Training Feasible in the Near-field?}
One major concern when deploying far-field beam training in the near-field is the reduction in far-field beamforming gain, denoted as $\mathcal{G}_\mathrm{ff} = \abs{\mathbf{a}^\mathsf{H}(\theta) \mathbf{h}(\theta, r)}$, where $\mathbf{h}(\theta, r)$ represents the near-field channel. Notably, the far-field beam suffers a $\unit[3]{dB}$ loss at the EBRD and experiences additional degradation at distances closer than the EBRD. The path loss factor, $g = \left(\lambda/4\pi r\right)$, which varies inversely with the distance $r$, mitigates the reduced beamforming gain at near-field ranges.

Fig. \ref{fig2} illustrates the normalized gain including path loss for near-field and far-field beams within the near-field communication range. The far-field beam yields approximately $\unit[12]{dB}$ higher gain at $0.01 \RD$ compared to the distance at $\RD$. Therefore, despite the far-field beam's diminished beamforming gain in the near-field, the worst-case overall gain attained in the near-field surpasses the best-case gain achieved in the far-field. As shown in Fig. \ref{fig2}, the far-field beam provides reduced SNR compared to near-field beams. Therefore, far-field beams may be employed for beam training rather than data transmission. After the feasibility of using far-field beams, we now focus on the analysis of the received DFT beam pattern in the near-field. 

\section{DFT Beam Pattern Analysis}
To ensure that a signal reaches a prospective user regardless of its location, we need to transmit DFT beams uniformly distributed over the coverage area. A far-field UE typically experiences high gain from a single DFT beam and reports its corresponding index to the BS. In contrast, a near-field UE may observe high gain across multiple DFT beams, depending on its distance from the BS. This difference arises because the far-field signals exhibit a planar wavefront, while the near-field signals present a spherical wavefront, which can be seen as a superposition of multiple planar wavefronts. In the sequel, based on the DFT beam sweeping, we quantify the array gain experienced by the near-field UE.

We consider a BS serving a LoS near-field UE at an angle $\au$ and distance $\rf$. The array gain from each DFT beam directed at angle $\theta_n$ and observed by the near-field UE is written as
\small
\begin{equation} 
\begin{aligned}
&\mathcal{G}(\au,\rf;\theta_n) ={\left| { \mathbf{b} ^{\mathsf{H}} (\au,\rf) \mathbf{a} (\theta_n)} \right|}^2,\\
&\approx
\tfrac{1}{\NBS^2}\left|\sum_{-\NBS/2}^{\NBS/2} e^{j\tfrac{2\pi}{\lambda}\{nd\sin(\au)- \frac{1}{2\rf}n^2d^2\cos^2(\au)\}-j\tfrac{2\pi nd\sin\theta_n}{\lambda} } \right|^2,\\
&\stackrel{(c_1)}{=}
\tfrac{1}{\NBS^2}\left|\sum_{-\NBS/2}^{\NBS} e^{-j\pi \{ n^2(\tfrac{d\cos^{2}\au}{2\rf}) - n(\sin\au - \sin\theta_n)\} } \right|^2,
\label{eqn9}
\end{aligned}
\end{equation}
\normalsize
where ($c_1$) is simplified assuming $d=\lambda/2$. We further simplify the array gain function in \eqref{eqn9} in terms of Fresnel functions $\mathcal{C(\cdot)}$ and $\mathcal{S(\cdot)}$, whose arguments depend on the UE location ($\au,\rf$) and DFT beam angle $\theta_n$.

\begin{theorem}
The gain function observed by a near-field UE when illuminated with orthogonal DFT beams can be approximated as
\begin{equation}
\begin{aligned}
\mathcal{G}(\au,\rf;\theta_n) 
&\approx
\left|\frac{\overline{\mathcal{C}}\left(\gamma_{1}, \gamma_{2}\right)+j \overline{\mathcal{S}}\left(\gamma_{1}, \gamma_{2}\right)}{2 \gamma_{2}}\right|^2, 
\label{eqn10}
\end{aligned}
\end{equation}

$\overline{\mathcal{C}}\left(\gamma_{1}, \gamma_{2}\right) \equiv \mathcal{C}\left(\gamma_{1}+\gamma_{2}\right)-\mathcal{C}\left(\gamma_{1}-\gamma_{2}\right)$ and $\overline{\mathcal{S}}\left(\gamma_{1}, \gamma_{2}\right) \equiv$
 $\mathcal{S}\left(\gamma_{1}+\gamma_{2}\right)-\mathcal{S}\left(\gamma_{1}-\gamma_{2}\right)$, where $\gamma_{1}=\sqrt{\frac{\rf}{d\cos^{2}\au}}(\sin\theta_n-\sin\au) $ and $\gamma_{2}=\frac{\NBS}{2} \sqrt{\frac{d\cos^{2}\au}{\rf}}$.
\label{theorem3}
\end{theorem}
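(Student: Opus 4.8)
\emph{Proof proposal.} The plan is to turn the finite quadratic-phase sum in \eqref{eqn9} into a continuous Fresnel integral and then take the squared modulus. First I would introduce the shorthand $a \equiv \tfrac{d\cos^{2}\au}{2\rf}$ and $b \equiv \sin\au-\sin\theta_n$, so that (treating the summation limits as the symmetric $\pm\NBS/2$) the last line of \eqref{eqn9} reads $\tfrac{1}{\NBS^2}\big|\sum_{n} e^{-j\pi(an^{2}-bn)}\big|^{2}$. Completing the square, $an^{2}-bn = a\big(n-\tfrac{b}{2a}\big)^{2}-\tfrac{b^{2}}{4a}$, so the sum factors as $e^{\,j\pi b^{2}/(4a)}\sum_{n} e^{-j\pi a(n-b/(2a))^{2}}$; the leading exponential has unit modulus and therefore drops out once we form $|\cdot|^{2}$.

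Next, since $\NBS\gg 1$ and the incremental quadratic phase between consecutive terms stays small over the bulk of the summation range, I would replace the Riemann sum by $\int_{-\NBS/2}^{\NBS/2} e^{-j\pi a(x-b/(2a))^{2}}\,dx$. The substitution $u=\sqrt{2a}\,\big(x-\tfrac{b}{2a}\big)$ carries the integrand into the canonical Fresnel kernel $e^{-j\frac{\pi}{2}u^{2}}$ with $dx = du/\sqrt{2a}$, and a short computation shows that the image of the interval endpoints is exactly $u_{+}=\gamma_{1}+\gamma_{2}$ and $u_{-}=\gamma_{1}-\gamma_{2}$, with $\gamma_{1},\gamma_{2}$ as in the statement — this is the bookkeeping step where $\sqrt{\rf/(d\cos^{2}\au)}$ and $\NBS/2$ recombine into $\gamma_{1}$ and $\gamma_{2}$. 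Fixing the convention $\mathcal{C}(x)=\int_{0}^{x}\cos(\tfrac{\pi}{2}t^{2})\,dt$, $\mathcal{S}(x)=\int_{0}^{x}\sin(\tfrac{\pi}{2}t^{2})\,dt$, so that $\int_{0}^{U}e^{-j\frac{\pi}{2}u^{2}}\,du=\mathcal{C}(U)-j\mathcal{S}(U)$, the definite integral equals $\tfrac{1}{\sqrt{2a}}\big[\overline{\mathcal{C}}(\gamma_{1},\gamma_{2})-j\overline{\mathcal{S}}(\gamma_{1},\gamma_{2})\big]$.

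Finally I would assemble the pieces: $\mathcal{G}(\au,\rf;\theta_n)\approx \tfrac{1}{\NBS^{2}}\cdot\tfrac{1}{2a}\,\big(\overline{\mathcal{C}}^{2}+\overline{\mathcal{S}}^{2}\big)$, where the unit-modulus phase from the square completion and the sign of $j$ inside the bracket are immaterial under $|\cdot|^{2}$, and $\tfrac{1}{2a}=\tfrac{\rf}{d\cos^{2}\au}$. Since $\gamma_{2}^{2}=\tfrac{\NBS^{2}}{4}\cdot\tfrac{d\cos^{2}\au}{\rf}$, we have $\tfrac{1}{\NBS^{2}}\cdot\tfrac{\rf}{d\cos^{2}\au}=\tfrac{1}{4\gamma_{2}^{2}}$, which gives $\mathcal{G}(\au,\rf;\theta_n)\approx\big|\,(\overline{\mathcal{C}}(\gamma_{1},\gamma_{2})+j\overline{\mathcal{S}}(\gamma_{1},\gamma_{2}))/(2\gamma_{2})\,\big|^{2}$, i.e.\ \eqref{eqn10}.

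I expect the only non-mechanical point to be justifying the sum-to-integral replacement: rigorously this is an Euler--Maclaurin / Poisson-summation argument, and one should verify that the remainder is negligible in the regime of interest (large aperture, and UE range large enough that the per-element quadratic phase step remains well below $2\pi$ so that no aliasing peaks appear). A secondary source of slips is keeping the Fresnel sign convention consistent and correctly propagating the shifted, asymmetric limits $\gamma_{1}\pm\gamma_{2}$ through the change of variables; fixing the convention up front, as above, removes that ambiguity.
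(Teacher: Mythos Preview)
Your proposal is correct and follows essentially the same route as the paper: complete the square in the quadratic-phase exponent (the paper does this implicitly by writing the summand as $e^{-j\pi(A_1 n - A_2)^2}$ with $A_1=\sqrt{a}$, $A_2=b/(2\sqrt{a})$), replace the sum by a Riemann integral for large $\NBS$, and apply the substitution $t=\sqrt{2}(A_1 n - A_2)=\sqrt{2a}\,(n-b/(2a))$ to land on the Fresnel form with limits $\gamma_1\pm\gamma_2$. Your treatment is in fact more explicit than the paper's about the square-completion, the endpoint bookkeeping, and the normalization $1/(\NBS^2\cdot 2a)=1/(4\gamma_2^2)$, and your remarks on the validity of the sum-to-integral step go beyond what the paper states.
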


\begin{proof}
We can express the gain function in \eqref{eqn9} as $\frac{1}{\NBS^2} \left| \sum_{-\NBS/2}^{\NBS/2} e^{-j \pi(A_1n - A_2)^2 } \right|^2$
where $A_{1}=\sqrt{\frac{d\cos^{2}\au}{2 \rf}}$ and $A_{2}=\frac{1}{A_{1}}\left(\frac{\sin\au - \sin\theta_n}{2}\right)$. When $\NBS \rightarrow \infty$, summation can be approximated as Riemann integral as follows
\begin{equation}
\begin{aligned}
&\mathcal{G}(\au,\rf;\theta_n) {\approx} \left|\tfrac{1}{\NBS} \int_{-\tfrac{\NBS}{2}}^{\tfrac{\NBS}{2}} e^{-j \pi\left(A_{1} n-A_{2}\right)^{2}} \mathrm{~d}n \right|^2, \\
 &\stackrel{\left(c_2\right)}{=} \left|\tfrac{1}{\NBS \sqrt{2} A_{1}} \int_{-\tfrac{1}{\sqrt{2}} A_{1} \NBS-\sqrt{2} A_{2}}^{\tfrac{1}{\sqrt{2}} A_{1} \NBS-\sqrt{2} A_{2}} e^{j \pi \frac{1}{2} t^{2}} \mathrm{~d}t \right|^2, 
\label{eqn11}
 \end{aligned}
\end{equation}
where $\left(c_2\right)$ is obtained by letting $A_{1} n-A_{2}=\frac{1}{\sqrt{2}} t$. Then, based on the properties of Fresnel integrals \cite{zhang2023mixed}, we obtain \eqref{eqn10}.
\end{proof}
\begin{figure}[t]
\centering
\includegraphics[width = 0.8\linewidth]
{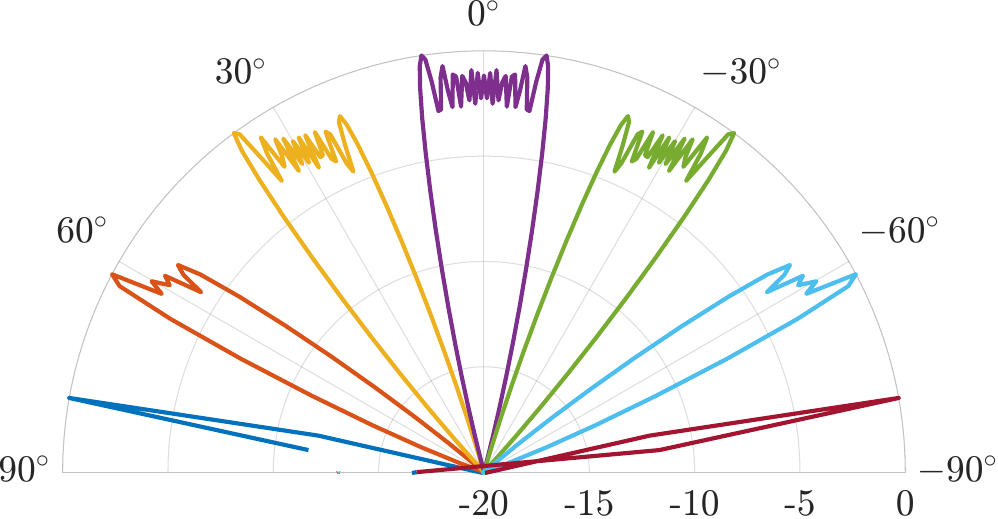}
\setlength{\belowcaptionskip}{-15pt}
\caption{The angular spread changes with the angle and reduces as the beam is directed toward larger angles away from the boresight.}
\label{fig3}
\end{figure}
For a given user, $\gamma_2$ is constant, and the gain function only varies with $\gamma_1$ that depends on $\theta_n$. Furthermore, $\mathcal{G}(\au,\rf;\theta_n)$ is symmetric with respect to $\gamma_1$ such that $\mathcal{G}(\gamma_1) = \mathcal{G}(-\gamma_1)$. Hence, high gain values in the beam pattern exist around the angle $\au$ of the UE. The angles where DFT beams provide a gain no less than $\unit[3]{dB}$ below the maximum, is referred to as the angular spread $\Omega_\mathrm{3dB}$ and defined as
\begin{equation}
\begin{aligned}
\Omega_\mathrm{3dB} \stackrel{\Delta}{=} \left\{\theta_n \mid \mathcal{G}\left( \au,\rf;\theta_n\right)>0.5 \max _{\theta_n} \mathcal{G}\left( \au,\rf;\theta_n\right)\right\}. 
\label{eqn12}
\end{aligned}
\end{equation}
To gain further insight, we plot the gain in \eqref{eqn10} at different UE angles and ranges as depicted in Figs. \ref{fig3} and \ref{fig4} respectively, and highlight the following observations:
\begin{enumerate}[label=(\roman*)]
    \item \textbf{ Angular spread $\Omega_\mathrm{3dB}$ decreases at endfire directions.} The angular spread across UE angles $[-80^\circ , -60^\circ , -30^\circ , 0^\circ, 30^\circ , 60^\circ , 80^\circ ]$ for a fixed range of $.01\RD$ is illustrated in Fig. \ref{fig3}, where it can be observed that $\Omega_\mathrm{3dB}$ decreases at large angles. Specifically, the angular spread is maximum at boresight and decreases at off-boresight angles. Also note that the pattern at $\theta_x$ is anti-symmetric to that at $\theta_{-x}$. The reduced angular spread at large angles is caused by the decreased effective array aperture, which results in a smaller phase difference between planar wave and spherical wave models. 
\item \textbf{Angular spread $\Omega_\mathrm{3dB}$ increases with a decrease in UE range.} 
Fig. \ref{fig4} depicts the angular spread across UE ranges and for a fixed UE angle ($\au = 0^\circ$). Observe that the angular spread varies with range, specifically widening at shorter distances. This is an important finding as it can be exploited to estimate the range of a near-field UE. 

\item \textbf{Angular spread $\Omega_\mathrm{3dB}$ exists only within the EBRD region ($\rf < \frac{\RD \cos^2\theta}{10}$) of the near-field}. From Fig. \ref{fig4}, it can be observed that $\Omega_\mathrm{3dB}$ diminishes when the user distance is greater than $\frac{\RD}{10}$. Likewise, in Fig. \ref{fig3}, $\Omega_\mathrm{3dB}$ approaches the far-field at large angles, i.e., at $\au = 80^\circ$. Therefore, the DFT beams can be utilized to estimate UE location only if the UE is positioned within the EBRD region.
\end{enumerate}

\begin{figure}[t]
\centering
\includegraphics[width = .62\linewidth]{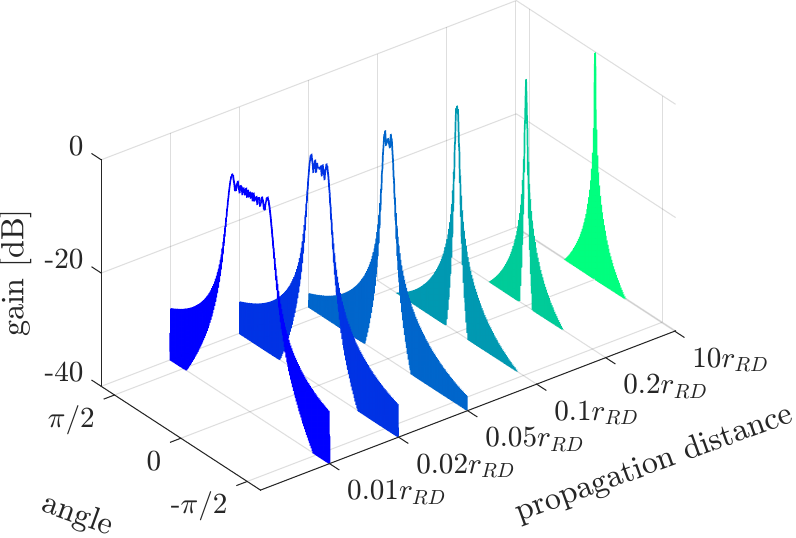}
\setlength{\belowcaptionskip}{-15pt}
\caption{The angular spread varies with range and decreases as the beam is focused at greater distances from the UM-MIMO array.}
\label{fig4}
\end{figure}

\section{Beam Training and Parameters Estimation}
In this section, we present the proposed beam training protocol and estimation procedure. The angle and range parameters of the UE are estimated by utilizing the DFT beam pattern analysis discussed in the previous section.
 
\subsection{Beam Training}
UM-MIMO transmits $\psi$ beams from the predefined DFT codebook to illuminate the coverage area of the BS, which is given by
\begin{equation}
 \mathbf{\Phi}^{\mathrm{DFT}} = [\boldsymbol{\phi}_1,\cdots,\boldsymbol{\phi}_n,\cdots, \boldsymbol{\phi}_\psi], 
 \label{eqn13}
\end{equation}
where $\boldsymbol{\phi}_n$ is given by \eqref{eqn2}.
Given the angular beamwidth $\theta_\mathrm{3dB} = \frac{2}{\NBS\cos\au}$ and coverage area defined by $\mathcal{A} =\abs{ \sin\theta_\mathrm{max} - \sin\theta_\mathrm{min}}$, the number of beams is $\psi = \frac{\mathcal{A}}{\theta_\mathrm{3dB}}$. Note that $\cos\au$ accounts for the beam broadening factor at large angles.
 
\subsection{Parameter Estimation based on Correlative Interferometry} 
Correlative interferometry is a well-established technique for determining the direction of an emitter. In traditional far-field settings, an antenna array undergoes a 360-degree rotation, during which the phases of the incoming signal from each direction are sampled and stored for each antenna element in a lookup table. Subsequently, the direction of the signal is estimated by correlating the received signal phases with the precomputed lookup table. Building on the principles of correlative interferometry, with specific modifications, we propose a CI-DFT-based range estimation algorithm tailored for near-field users located within the EBRD region.

As discussed in the previous section, the angular spread in the near-field provides information about the user’s location in terms of range and angle. To leverage this, we propose generating $\mathcal{G}(\au,\rf;\theta_n)$ for various ranges and angles based on \eqref{eqn10} and storing the corresponding $\Omega_{\mathrm{3dB}}$ values in a lookup table $\boldsymbol{\mathcal{K}}$. Each value of $\Omega_{\mathrm{3dB}}$ is unique for a specific combination of angle and distance, and is stored in $\boldsymbol{\mathcal{K}}$.

\subsubsection{Generate Lookup Table}
The procedure for generating the lookup table involves computing a polar grid and subsequently estimating the angular spread at each polar point. In this polar grid, angles are sampled uniformly, while range points are updated based on beamdepth. Beam depth limits maximum range resolution, thus determining the selection of range samples. The implementation of the lookup table generation is summarized in Algorithm \ref{algo_pcodebook}, whose main steps are briefly explained as follows:

\begin{figure}[ht]
    \centering
    \begin{subfigure}[t]{\linewidth}
        \centering
\includegraphics[width=.8\linewidth]{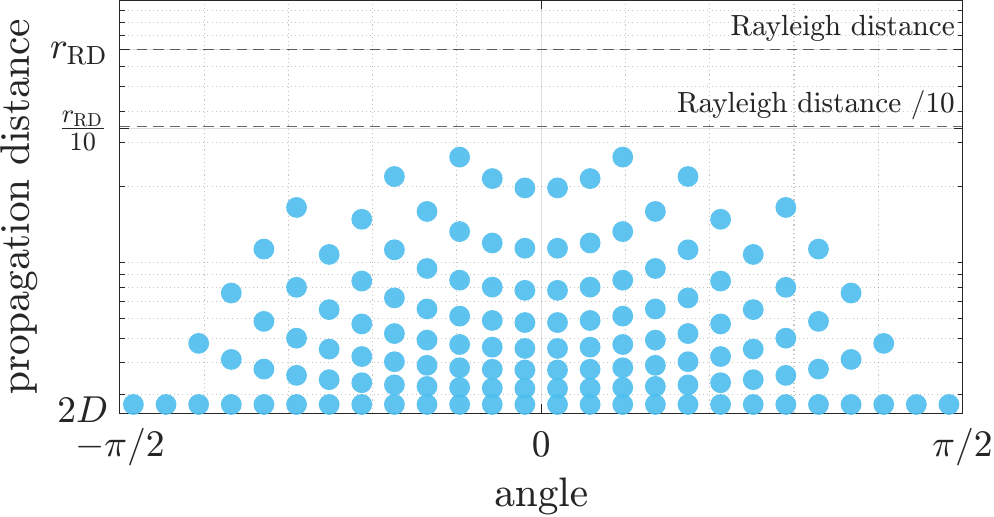} 
        \caption{}
                \vspace{.5em} 
        \label{fig_polargrid}
    \end{subfigure}
    \begin{subfigure}[t]{\linewidth}
        \centering
        \includegraphics[width=.8\linewidth]{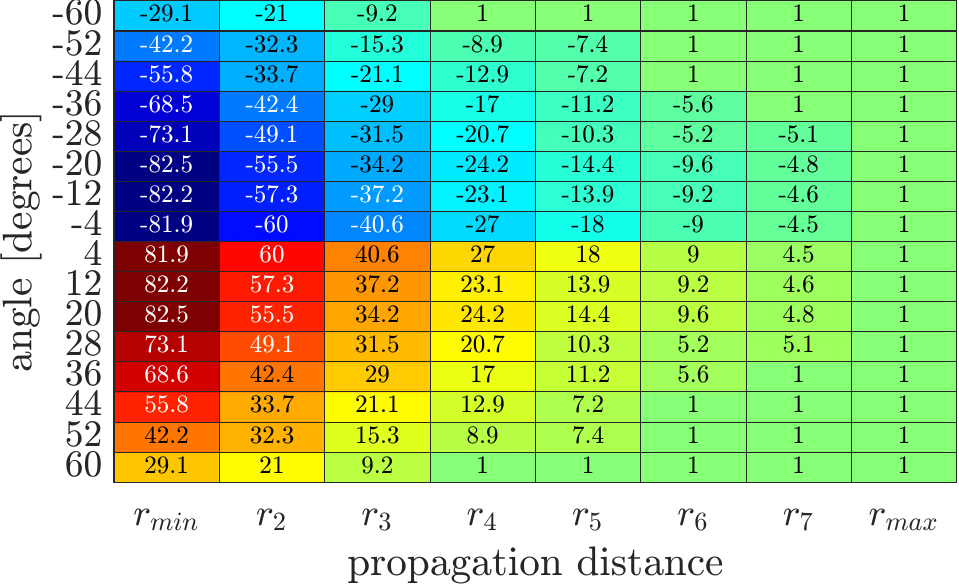} 
        \caption{}
        \label{fig5}
    \end{subfigure}
    \setlength{\belowcaptionskip}{-15pt}
    \caption{(a) Grid of polar points depicting uniform angle and non-uniform range sampling. (b) The lookup table shows that the angular spread increases at closer ranges, and decreases at large angles away from the boresight. Here, $r_\mathrm{min} = 2D$, $r_\mathrm{max} = \text{EBRD}$.}
\end{figure}

After uniform angle sampling in line 3, the minimum range is selected as $2D$, where $D$ represents the antenna aperture. This lower range limit is chosen because the array gain degrades significantly below this distance. For each angle $\theta_i$, as long as the beamdepth is shorter than the EBRD, we repeat steps from line 7-13 as follows: 

In line 8, for each combination of angle and distance $(\theta_i, r_{s,i})$, the DFT beam gain is computed using \eqref{eqn10}. Then, in line 9, the 3 dB angular spread $\Omega_\mathrm{3dB}$ is determined based on $\boldsymbol{\mathcal{G}}(\theta_i, r_{s,i}) = \mathcal{G}(\theta_i, r_{s,i}; \theta_n)$, $\forall$ $n \in [1, 2, \dots, \NBS]$. The calculated angular spread is then stored in $\boldsymbol{\mathcal{K}}(i,s)$. In line 11, the beamdepth $r_\mathrm{BD}$ is computed using \eqref{eqn8}, and in line 12, the next range sample is obtained by adding $r_\mathrm{BD}$ to the current range sample. We repeat steps from lines 7-13, $\forall$ $\theta \in [-\frac{\pi}{2} \ \frac{\pi}{2}]$, to generate the complete lookup table $\boldsymbol{\mathcal{K}}$. 

Fig. \ref{fig_polargrid} illustrates a polar grid characterized by uniform angular spacing and non-uniform radial sampling, for a frequency of $\unit[28]{GHz}$ and $\NBS = 256$. Figure \ref{fig5} depicts a sample lookup table for identical parameters. A heatmap of a limited number of values is displayed to convey the main concept. The variation in angular spread across range and angle follows the trend described earlier in Section IV. It is worth noting that the dynamic range across angles is small compared to the range dimension. For angle estimation, a coarse angle can be estimated by taking the median of the angular spread. The lookup table, customized to the specific parameters of the BS, is pre-generated and stored on the UE device.
 
\subsubsection{Angle and Range estimation}
After beam sweeping, based on the received DFT beam pattern, the UE measures angular spread $\hat{\Omega}_{\mathrm{3dB}}$ by \eqref{eqn12}. The UE then correlates $\hat{\Omega}_{\mathrm{3dB}}$ with the predefined lookup table $\boldsymbol{\mathcal{K}}(\theta_i,r_j)$ to estimate the angle $\hat{\theta}$ and range $\hat{r}$ as 
\begin{equation}
 \hat{\theta}, \hat{r} = \arg\max_{\theta_i, r_j}\left[\cos\left(\hat{\Omega}_\mathrm{3dB} - \mathcal{K}(\theta_i, r_j)\right)\right]. 
 \label{eqn14}
\end{equation}
After the estimation process, the UE feeds back the estimated angle and range $(\hat{\theta}, \hat{r})$ to the BS.

\IncMargin{0.5em}
\begin{algorithm}[t!]\footnotesize
\caption{Lookup Table for CI-DFT}\label{algo_pcodebook}
\SetKwInOut{Input}{Input}
\SetKwInOut{Output}{Output}
\DontPrintSemicolon
\SetKwFunction{CPG}{Compute Lookup Table }
\renewenvironment{algomathdisplay}
\opthanginginout 

\Input{$\lambda, \NBS, D, \RD$}

\vspace{2pt}
\SetKwProg{myproc}{Procedure}{}{end}
\myproc{\CPG}{
\For{$i = 1$ \KwTo $\NBS$}{

 $\theta_i \leftarrow \frac{-\NBS\pi+2i\pi}{2\NBS}$
 
 $\rf \leftarrow 2D$
 
 ${r}_{{s,i}}$ $\leftarrow$ $\{\emptyset\}$, $s \leftarrow1$
 
\While{$r_\mathrm{BD} \leq\frac{\RD}{10}\cos^2{\theta_i}$}{
 ${r}_{{s,i}} \leftarrow \rf$

 $\boldsymbol{\mathcal{G}}(\theta_i,r_{s,i}) \leftarrow \text{Ref } \ \eqref{eqn10}$ 

 $\Omega_\mathrm{3dB}(\theta_i,r_{s,i}) \leftarrow \boldsymbol{\mathcal{G}}(\theta_i,r_{s,i})$ \tcp*{Ref \eqref{eqn12}} 

$\boldsymbol{\mathcal{K}}(i,s) \leftarrow \Omega_\mathrm{3dB}(\theta_i,r_{s,i}) $

 ${r_\mathrm{BD} \leftarrow \text{Ref } \ \eqref{eqn8}}$
 
 ${\rf \leftarrow \rf + r_\mathrm{BD}}$
 
 $s \leftarrow s+1$
 }
 
 }
\KwRet{$\boldsymbol{\mathcal{K}}$}
 }

\end{algorithm}\DecMargin{0.5em}

\section{Simulation Results}
In this section, numerical results are presented to validate the efficacy of the proposed CI-DFT-based algorithm. The BS employing UM-MIMO array is equipped with $\NBS = 256$ antenna elements, and the operating frequency is $\unit[28]{GHz}$. For this setup, Rayleigh distance is $\unit[350]{m}$ and EBRD equals $35\cos^2\theta$. All the presented results are based on $1000$ iterations. We compare with the following benchmark schemes:

\begin{itemize} 
\item \textbf{Perfect CSI}: CSI is assumed to be available at the BS, providing an upper bound for the average rate performance.

\item \textbf{Near-field hierarchical beam training \cite{lu2023hierarchical}}: In this scheme, the near-field hierarchical codebook consists of $K$ levels of sub-codebooks, dividing the near-field region into $N_x$ and $N_y$ grids in the x and y directions, respectively.
\item \textbf{Far-field beam training}: This is the classical beam training designed for far-field scenarios.
\item \textbf{Exhaustive search}: This scheme performs an exhaustive search over the polar-domain codebook, involving a two-dimensional search across both the angular and range domains.
\end{itemize} 

We compare the different schemes based on the rate $R = \log_2\left(1+ \frac{P_t\NBS\abs{\mathbf{b}^\mathsf{H}(\theta,r)\mathbf{w}^{\star}}^2}{\sigma^2}\right)$, where $P_t$ is the transmitted power, $\sigma^2$ is the noise variance, and $\mathbf{w}^{\star}$ is the chosen best codeword.

Fig. \ref{fig6} illustrates the average rate performance as a function of propagation distance. The UE range varies from $\unit[3]{m}$ to $\frac{\RD}{10} =\unit[35]{m}$ while the spatial angle $\sin\theta$ is randomly distributed as $\mathcal{U} [-\frac{\sqrt{3}}{2}, \frac{\sqrt{3}}{2}]$. Under the assumption of perfect calibration, the proposed CI-DFT closely matches the exhaustive search scheme while significantly reducing training overhead. The proposed CI-DFT outperforms other benchmark schemes by effectively utilizing the DFT beam pattern to estimate the user range and angle. To simulate array calibration errors, random phase values were introduced within the interval $\mathcal{U}[0, \pi/8]$, along with random amplitude values drawn from $\mathcal{U} [0, 1]$. Phase errors have a more detrimental impact on performance compared to amplitude errors. 

The average rate performance against SNR is shown in Fig. \ref{fig7}. The user is uniformly distributed between $ \mathcal{U}[\unit[3]{m}, \unit[35]{m}]$. It is observed that the average rate performance of all schemes improves with increasing SNR. Notably, the performance of the proposed CI-DFT closely approaches that of the exhaustive scheme.
 
Table \ref{table:1} summarizes the comparison of training overhead across all schemes. The angular beamwidth $\theta_\mathrm{3dB}$, calculated as $\frac{2}{\NBS}$, is $.0078$. To adequately cover the spatial region \(\mathcal{U} = [\sin\theta_{\min}, \sin\theta_{\max}]\), where \(\sin\theta_{\min} = -\frac{\sqrt{3}}{2}\) and \(\sin\theta_{\max} = \frac{\sqrt{3}}{2}\), a total of \(\psi = 222\) angular beams are utilized, along with \(S = 8\) distance samples.
 Consequently, the training overhead for the near-field exhaustive scheme amounts to $1776$. In contrast, the near-field hierarchical approach with parameters $N_x = 25$, $N_y = 25$, and $K=2$, results in a reduced overhead of $1250$. The training overhead of the proposed CI-DFT is equivalent to that of the far-field scheme, amounting to $222$. The proposed CI-DFT reduces the training overhead by a factor of $8$ (i.e. 87.5\%) compared to the exhaustive search scheme while still achieving equivalent performance.

\begin{figure}[t]
\centering
\includegraphics[scale = 0.45]{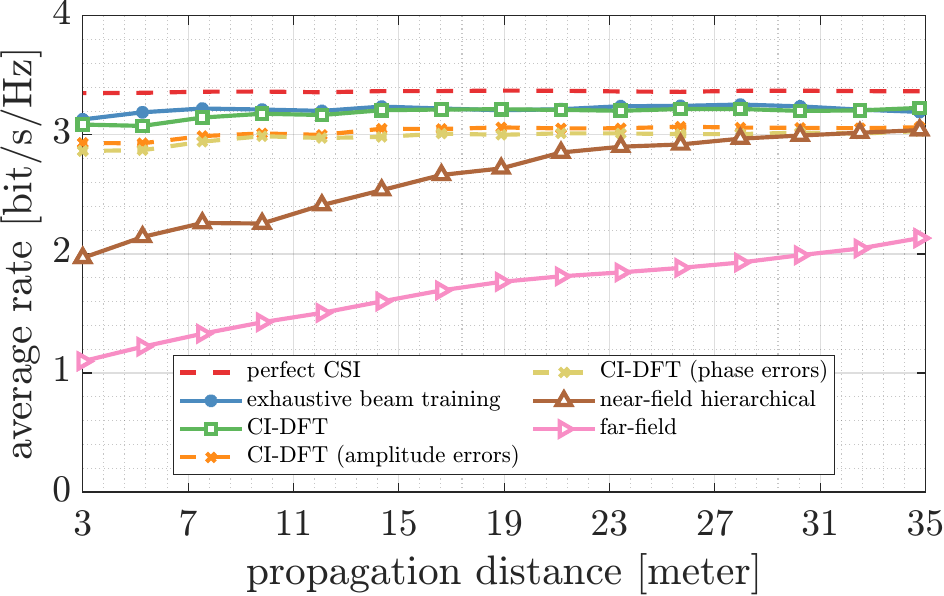}
\caption{Average rate performance vs. distance.}
\setlength{\belowcaptionskip}{-20pt}
\label{fig6}
\end{figure}

\begin{figure}[t]
\centering
\includegraphics[scale = 0.45]{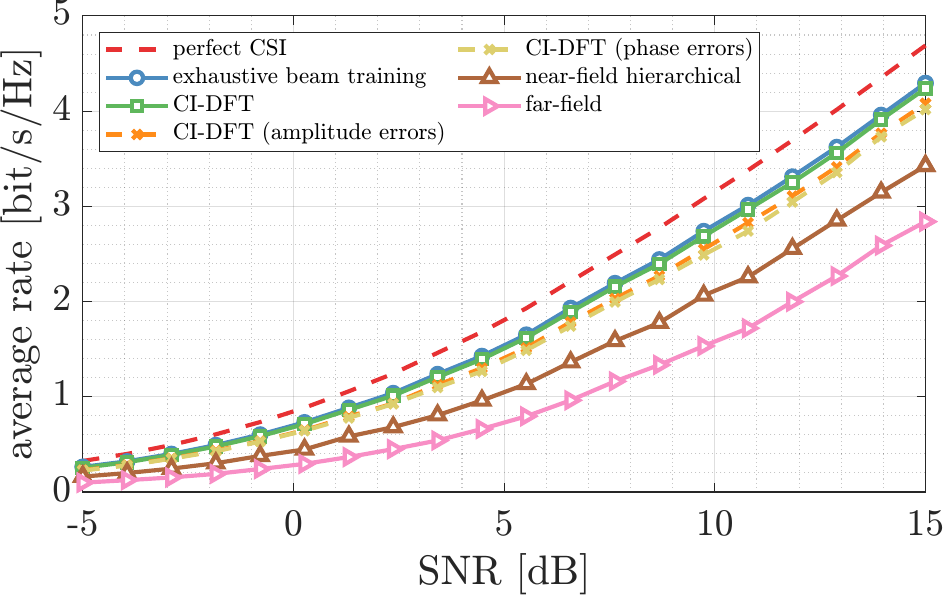}
\caption{Average rate performance vs. SNR.}
\label{fig7}
\end{figure}

\begin{table}[t]\footnotesize
\setlength{\tabcolsep}{10pt}
\renewcommand{\arraystretch}{1.25}
\centering
\caption{Comparison of Training overhead}
\begin{tabular}{|l|l|l|}
\hline
\textbf{Algorithm} & \textbf{Training Overhead} & \textbf{Value}\\
\hline
Near-Field Exhaustive & $\psi S$ & $1776$\\
Near-Field hierarchical & $\sum_{k}^{K} {N_{x}^{(k)}} {N_{y}^{(k)}} S$ & $1250$\\
Far-Field & $\psi $ & $222$\\
Proposed CI-DFT & $\psi$ & $222$\\
 \hline
\end{tabular}
\vspace{-10pt}
\label{table:1} 
\end{table}

\section{Conclusion}
In this work, we have proposed a near-field beam training algorithm based on a far-field codebook. We have first identified the coverage areas of both near-field and far-field codewords and demonstrated that far-field codewords provide the necessary SNR for near-field beam training. Additionally, we have shown that the DFT-based beam pattern received in the near-field contains both range and angle information. Leveraging this, we have proposed a correlation interferometry-based algorithm to estimate the angle and the range parameters. A potential future direction is to extend this approach to other planar geometries. 

\bibliographystyle{IEEEtran}
\bibliography{IEEEabrv,my2bib}
\end{document}